\renewcommand{\leq}{\leqslant}
\renewcommand{\geq}{\geqslant}
\newcommand{\eps} {\varepsilon}
\newcommand{\cancel}[1]{\ignorespaces}
\newtheorem{theorem}{Theorem}[section]
\newtheorem{lemma}[theorem]{\bfseries{Lemma}}
\newbox\ProofSym
\begin{document}

	\title{Multistage Online Maxmin Allocation of \\ Indivisible Entities\thanks{Research supported by the
			Research Grants Council, Hong Kong, China (project no.~16207419).}}
	\author{Siu-Wing Cheng\footnote{Department of Computer Science and Engineering,
			HKUST, Hong Kong, China.}}
	
	\date{}
	
	\maketitle
	
	\begin{abstract}
	We consider an online allocation problem that involves a set $P$ of $n$ players and a set $E$ of $m$ indivisible entities over discrete time steps $1,2,\ldots,\tau$.  At each time step $t \in [1,\tau]$, for every entity $e \in E$, there is a restriction list $L_t(e)$ that prescribes the subset of players to whom $e$ can be assigned and a non-negative value $v_t(e,p)$ of $e$ to every player $p \in P$.  The sets $P$ and $E$ are fixed beforehand.  The sets $L_t(\cdot)$ and values $v_t(\cdot,\cdot)$ are given in an online fashion.   An allocation is a distribution of $E$ among $P$, and we are interested in the minimum total value of  the entities received by a player according to the allocation. In the static case, it is NP-hard to find an optimal allocation the maximizes this minimum value.  On the other hand, $\rho$-approximation algorithms have been developed for certain values of $\rho \in (0,1]$.  We propose a $w$-lookahead algorithm for the multistage online maxmin allocation problem for any fixed $w \geq 1$ in which the restriction lists and values of entities may change between time steps, and there is a fixed stability reward for an entity to be assigned to the same player from one time step to the next.  The objective is to maximize the sum of the minimum values and stability rewards over the time steps $1, 2, \ldots, \tau$.  Our algorithm achieves a competitive  ratio  of $(1-c)\rho$, where $c$ is the positive root of the equation $wc^2 = \rho (w+1)(1-c)$.  When $w = 1$, it is greater than $\frac{\rho}{4\rho+2} + \frac{\rho}{10}$, which improves upon the previous ratio of $\frac{\rho}{4\rho+2 - 2^{1-\tau}(2\rho+1)}$ obtained for the case of 1-lookahead.  
\end{abstract}

\section{Introduction}

Distributing a set $E$ of indivisible \emph{entities} to a set $P$ of \emph{players} is a very common optimization problem.  The problem can model an assignment of non-premptable computer jobs  to machines, a division of tasks among workers, allocating classrooms to lectures, etc.   The \emph{value} of an entity $e \in E$ to a player  $p \in P$ is usually measured by a non-negative real number.  In the single-shot case, the problem is to assign the entities to the players in order to optimize some function of the values of entities received by the players.  Every entity is assigned to at most one player.

%In the minmax scheduling problem, the problem is to assign a set of jobs to a set of machines so that the workload of the busiest machine is minimized~\cite{DJ81,G66,GLLK79,LST90}.  In this case, the jobs are the entities, the machines are the players, and the value of a job $e$ to a machine $p$ is the workload incurred by $e$ on $p$.   The total workload on a machine is the sum of the workloads of the jobs assigned to it.  A job may incur different workloads on different machines.  This problem is NP-hard and a 2-approximation algorithm is presented in~\cite{LST90}.  In the special that a job has the same workload for all machines,  there a fully polynomial-time approximation scheme if the number of machines is fixed~\cite{S76}, and there is a polynomial-time approximation scheme if the number of machines is part of the input~\cite{HS87}.  Similar results are also obtained in the \emph{uniform machine} setting:  each job $j$ has a fixed intrinsic processing requirement $p_j$, each machine $i$ has a speed factor $s_i$, and the workload of job $j$ on machine $i$ is $p_j/s_i$.  If the number of machines is fixed, there is a fully polynomial-time approximation scheme~\cite{HS76}; if the number of machines is part of the input, there is a polynomial-time approximation scheme~\cite{HS88}.

The problem of maximizing the minimum total value of  entities assigned to a player is known as the \emph{maxmin fair allocation} or the \emph{Santa Claus} problem.  No polynomial-time algorithm can give an approximation ratio less than 2 unless P = NP~\cite{BD05}.  An LP relaxation of the problem, called configuration LP, has been developed; although its size is exponential,  it can be solved  by the  ellipsoid  method in polynomial time without an explicit construction of the entire LP~\cite{BS06}.  A polynomial-time algorithm was developed to round the fractional solution of the configuration LP to obtain an $\Omega(n^{-1/2}\log^{-3} n)$ approximation ratio~\cite{AS10}.  Subsequently, the approximation ratio was improved to $\Omega((n\log n)^{-1/2})$~\cite{SS}.  A tradeoff was obtained in~\cite{CCK09} between the approximation ratio and the exponent in the polynomial running time:~for any $\eps \geq 9\log\log n/\log n$, an $\Omega(n^{-\eps})$-approximate allocation can be computed in $n^{O(1/\eps)}$ time.  An important special case, the \emph{restricted} maxmin allocation problem, is that for every entity $e$, the value of $e$ is the same for players who want it and zero for the other players.  In this case, the configuration LP can  be used to give an~$\Omega(\log\log\log n/\log\log n)$-approximate allocation~\cite{BS06}. Later, it was shown that the  approximation ratio can be bounded by a large, unspecified constant~\cite{F,HSS}.  Subsequently, for any $\delta \in (0,1)$, the approximation ratio has been improved to $\frac{1}{6 + 2\sqrt{10} + \delta}$ in~\cite{AKS}, $\frac{1}{6+\delta}$ in~\cite{CM18,DRZ18}, and $\frac{1}{4+\delta}$ in~\cite{CM19,DRZ20}.

Recently, there has been interest in solving online optimization problems in a way that balances the optimality at each  time step and the stability of the solutions between successive time steps~\cite{BEM18,BCN14,BCNS12,CCDL16,GTW14}.  In the context of allocating indivisible entities, the following setting has been proposed in~\cite{BEM18}.  The sets of players and entities are fixed over a time horizon $t = 1, 2, \ldots, \tau$.  The value of $\tau$ may not be given in advance.  At the current time $t$, for every entity $e$, we are given the restriction list $L_t(e)$ of players to whom $e$ can be assigned and the value $v_t(e,p)$ of $e$ for every player $p \in P$.  We assume that $v_t(e,p) = 0$ if $p \not\in L_t(e)$.  In the strict online setting, no further information is provided.  In the $w$-lookahead  setting for any $w \geq 1$, we are given $L_{t+i}(\cdot)$ and $v_{t+i}(\cdot,\cdot)$ for every $i \in [0,w]$ at time $t$.  Note that $L_t(e) \subseteq P$; if $p \not= q$, $v_t(e,p)$ and $v_t(e,q)$ may be different; if $s \not= t$, $L_s(e)$ and $L_t(e)$ may  be different and so may $v_s(e,p)$ and $v_t(e,p)$.  At current time  $t$, we need to decide irrevocably an  allocation $A_t$ of the entities to the players so that the constraints given in $L_t(\cdot)$ are satisfied.  The objective is to maximize $\sum_{t=1}^\tau \min_{p \in P} \bigl\{\sum_{(e,p) \in A_t} v_t(e,p)\bigr\} + \sum_{t=1}^{\tau-1} \sum_{(e,p) \in E \times P} \Delta \cdot \bigl|\{(e,p) : (e,p) \in A_t \cap A_{t+1}\}\bigr|$, where $\Delta$ is some fixed non-negative value specified by the user.  The first term is the sum of the minimum total value of entities assigned to a player at each  time $t$.  A stability reward of $\Delta$ is given for keeping an entity at the same player between two successive time steps.  The second term is the sum of all stability rewards over all entities and all pairs of successive time steps.   The following results are obtained in~\cite{BEM18} for the multistage online maxmin allocation problem.  Let $\mathcal{A}$ be a $\rho$-approximation algorithm for some $\rho \leq 1$ for the single-shot maxmin allocation problem.  If $L_t(e) = P$ for every $e \in E$ and every $t \in [1,\tau]$, one can use $\mathcal{A}$ to obtain a competitive ratio of  $\frac{\rho}{\rho+1}$.  It takes $O(mn+T(m,n))$ time at each time step, where $T(m,n)$ denotes the running time of $\mathcal{A}$.  When the restriction lists $L_t(\cdot)$ are arbitrary subsets of $P$, it is impossible to achieve a bounded competitive ratio in the strict online setting.  On the other hand, using 1-lookahead, one can obtain a competitive ratio of $\frac{\rho}{4\rho + 2 - 2^{1-\tau}(2\rho+1)}$.  It takes $O(mn + T(m,n))$ time at each time step.

Two examples for the multistage online maxmin allocation problem are as follows.   Given a set of computing servers and some daily analytic tasks, the goal is to assign the executions of these tasks to the servers so that the minimum utilization of a server is maximized.  On each day, a task may only be executable at a particular subset of the servers due to resource requirements and data availability.  Moreover, there is a fixed gain in system efficiency by executing the same task at the same server on two successive days.  As the allocation of the daily analytic tasks to servers have to be performed quickly, one may choose $\mathcal{A}$ to be a polynomial-time approximation algorithm.  Nevertheless, in some planning problem, one may have enough time to solve the single-shot maxmin allocation problem exactly.  Consider an example in which a construction company is to produce assignments of engineers to different construction sites on an annual basis. Due to expertise and other considerations, an engineer can only work at a subset of the sites in the coming year.  The company wants to maximize the minimize annual progress of a site, and there is a fixed gain in efficiency in keeping an engineer at the same site from one year to the next.  If only a moderate number of engineers are involved, there may be enough time to take $\mathcal{A}$ to be an exact algorithm for solving the single-shot maxmin allocation problem.

In this paper, we improve the competitive ratio for the multistage online maxmin allocation problem and generalize to the case of $w$-lookahead for any fixed $w \geq1$.  We design a new online algorithm that achieves a competitive ratio of $(1-c)\rho$, where $c$ is the positive root of the equation $wc^2 = \rho(w+1)(1-c)$.  Our algorithm takes $O(wmn\log (wn) + w\cdot T(m,n))$ time at each time step.  The total time spent in invoking $\mathcal{A}$ for the entire time horizon $[1,\tau]$ is $O(\tau \cdot T(m,n))$.  If $w = 1$, our competitive ratio is greater than $\frac{\rho}{4\rho+2} + \frac{\rho}{10}$, which is better than the ratio of $\frac{\rho}{4\rho+2 - 2^{1-\tau}(2\rho+1)}$ for the case of 1-lookahead in~\cite{BEM18}.  
%In general, our competitive ratio is greater than $\frac{(5-\sqrt{5})w-\sqrt{5}+1}{4w}\cdot \rho \approx \frac{0.69w - 0.31}{w} \cdot \rho$.

%For example, if the input instances fall in the restricted maxmin setting and $\mathcal{A}$ is taken to be a $\frac{1}{4+\delta}$-approximation algorithm for the static restricted maxmin problem~\cite{CM19,DRZ20}, our competitve ratio is $\min\{\frac{1}{8+2\delta},\frac{w}{2w+2}\} = \frac{1}{8+2\delta}$, whereas the competitive ratio in~\cite{BEM18} tends to $\frac{1}{12+2\delta}$ as $\tau \rightarrow \infty$.  In the extreme case where $\rho = 1$ (i.e., $\mathcal{A}$ is an exact algorithm), our competive ratio is $\frac{w}{2w+2} \geq \frac{1}{4}$ and the competitve ratio in~\cite{BEM18} is $\frac{1}{6}$ as $\tau \rightarrow \infty$.
	
\section{Notation}

Let $I_t$ denote the input instance at time $t$ which specifies $L_t(\cdot)$ and $v_t(\cdot,\cdot)$.  Let $I_{a:b}$ denote the set of input instances $I_a, I_{a+1},\ldots, I_b$.  An  allocation  $C_t$ for $I_t$ is a set of ordered pairs $(e,p)$ for some $e \in E$ and $p \in P$ such that $p \in L_t(e)$ and every $e$ belongs to at most one pair  in $C_t$.  We use $C_{a:b}$ to denote the set of allocations $C_a,\ldots, C_b$ for the input instances $I_{a:b}$.  For every entity~$e$, $C_t[e]$ denotes the assignment of $e$ at time $t$ specified in $C_t$.  It is possible that $e$ is unassigned at time $t$.  For any interval $[a,b] \subseteq [1,\tau]$ and any entity $e$, we use $C_{a:b}[e]$ to denote the set of assignments $C_a[e],\ldots,C_b[e]$.

An alternative way to view $C_{1:\tau}$ is that it specifies a sequence of disjoint time intervals for every entity $e$.  In each time interval, $e$ is assigned to a single player.  We call these intervals \emph{assignment intervals}.  Our online algorithm generates a set of allocations $C_{1:\tau}$ by specifying these assignment intervals for the entities.  Because our algorithm does not know the all future instances, it is possible that it may generate two consecutive assignment intervals in which $e$ is assigned to the same player in them.  Ideally, we would like to merge such a pair of intervals; however, it is more convenient for our analysis to keep them separate.  Therefore, we do not assume that an assignment interval is a maximal interval such that $e$ is assigned to the same player, although this would be the case for the optimal offline solution for $I_{1:\tau}$.  

Given $C_{1:\tau}$ and any $[a,b] \subseteq [1,\tau]$, the assignment interval endpoints in $C_{a:b}$ refer to the endpoints of the assignment intervals in $C_{1:\tau}$ that lie in $[a,b]$.  The assignment intervals in $C_{a:b}$ refer to the assignment intervals in $C_{1:\tau}$ that are contained in $[a,b]$.  For every entity $e$, we can similarly interpret the notions of assignment interval endpoints in $C_{a:b}[e]$ and assignment intervals in $C_{a:b}[e]$.  Due to the constraints posed by $L_t(\cdot)$, it is possible that an entity $e$ is unassigned at some time step, so there may be a gap between an assignment interval end time and the next assignment interval start time in $C_{1:\tau}[e]$.  

Take any set of allocations $C_{1:\tau}$ for $I_{1:\tau}$.  
%Due to the limited knowledge of the future, the allocations in $C_{1:\tau}$ beyond certain time step may be empty because we do not know the corresponding input instances yet.  
Define the following quantities:
\begin{align*}
\nu(C_t)  & = \min_{p \in P} \left\{\sum_{(e,p) \in C_t} v_t(e,p)\right\}, \\
\nu(C_{a:b}) & = \sum_{t=a}^b \nu(C_t), \\
\lambda(C_{t:t+1}[e]) & = \left\{\begin{array}{lcl}
	\Delta, & & \mbox{if $[t,t+1]$ is contained in an assignment interval of $e$}; \\
	0, & & \mbox{otherwise},
\end{array}\right. \\
\lambda(C_{a:b}[e]) & = \sum_{t=a}^{b-1} \lambda(C_{t:t+1}[e]), \\
\lambda(C_{a:b}) & = \sum_{e \in E} \lambda(C_{a:b}[e]).
\end{align*}
We call $\lambda(C_{a:b})$ the \emph{stability value} of $C_{a:b}$.  The value $\lambda(C_{t:t+1}[e])$ is \emph{stability reward of $e$ from $t$ to $t+1$}.

Our online algorithm requires a $w$-lookahead for any fixed $w \geq 1$.  That is, the input instances $I_{t+i}$ for all $i \in [0,w]$ are given at the current time step $t$.  
%We assume that $\tau$ is a multiple of $w$, which can be enforced by simply padding with empty input instances, i.e., instances in which $L_t(\cdot)$ are empty sets and $v_t(\cdot,\cdot)$ are zeros.  
We assume that $I_{\tau+j}$ for any $j \geq 1$ is an empty input instance (i.e., instances in which $L_t(\cdot)$ are empty sets and $v_t(\cdot,\cdot)$ are zeros) so that we can talk about the $w$-lookahead at $\tau - i$ for any $i \in [0,w-1]$.

%For any $1 \leq a \leq b  \leq \tau$, define the \emph{gain of $A_{a:b}$} as 
%\[
%\mathit{gain}(A_{a:b}) = \lambda(A_{a:b}) + \nu(A_{a:b}),
%\]
%where $\lambda(A_{a:b})$ is taken to be zero if $a = b$.   Our problem  is to compute $A_t$ in an online fashion for $t = 1,2,\ldots,\tau$ such that $\mathit{gain}(A_{1:\tau})$ is as large as possible.  

%

\section{Multistage online maxmin allocation}

\subsection{Overview and periods}

We start off by initializing a set $S_{1:\tau}$ of empty allocations.  Then, we use a greedy algorithm to update $S_{1:1+w}$ to be a set of allocations that maximize the stability value with respect to $I_{1:1+w}$.  We also use $S_{1:1+w}$ to generate the first \emph{period} as follows. 

The time step 1 is taken as a default \emph{period start time}.  In general, suppose that the current time step $s$ is a period start time.  Then, we use a greedy algorithm to compute the assignment intervals for some entities for $I_{s:s+w}$ provided by the $w$-lookahead.  This gives an updated $S_{s:s+w}$.  Every assignment interval end time in $S_{s:s+w}$ is a \emph{candidate end time}.  The time step $s+w$ is a default candidate end time.  For every assignment interval start time $i$ in $S_{s+1:s+w}$, $i-1$ is also a candidate end time.    (If $s$ is an assignment start time, $s$ does not induce $s-1$ as a candidate end time.)  Let $t$ be the smallest candidate end time within $[s,s+w]$.  Then, $[s,t]$ is the next period.  It is possible that $t = s+w$.  It is also possible that $t$ lies inside an assignment interval of an entity $e$ in $S_{1:s+w}[e]$.  To determine the allocations for $[s,t]$, we compute a set of allocations $B_{s:t}$ by running the $\rho$-approximation algorithm $\mathcal{A}$ on the instances $I_{s:t}$.    By a judicious comparison of $\nu(B_{s:t})$ and $\lambda(S_{s:t})$, we set the allocations $A_{s:t}$ to be $S_{s:t}$ or $B_{s:t}$.  $A_s$ will be returned at the current time step $s$; for each future time step $i \in [s+1,t]$,  $A_i$ will be returned.  The next period start time is $t+1$ and we will repeat the above at that time.

There are two main reasons for our improvement over the result in~\cite{BEM18}.   First, we do not recompute after some waiting time that is fixed beforehand.  The periods are dynamically generated and updated using the allocations produced by a greedy algorithm.  This fact allows us to make better use of the stability values offered by these greedy allocations.  Second, the greedy allocations and the $\rho$-approximate allocations are also compared in~\cite{BEM18} in determining the allocation for the current time step; however, our comparison is different because it allows us to reap the potential stability reward from the previous period to the current period, and at the same time, the potential stability reward from the current period to the next.
%sensitive to the combined effect of both the allocations that we computed for the previous period and the possible greedy allocations that we may compute in the next period.  This allows us to capture the potential stability reward that can be attained when transiting from one period to the next.

\subsection{Greedy allocations}

\begin{algorithm}[b]
	\begin{algorithmic}[1]
		\caption{StableAllocate$(a,b)$}
		\label{alg:2}
		%\STATE{let $[t,t+w]$ be the current period}
		\FOR{every entity $e$}
		\IF{no assignment interval in $S_{1:\tau}[e]$ starts before $a$ and contains $a$}
		\STATE{$S_{a:b}[e] \leftarrow \mathrm{StableEntity}(e,a,b)$}
		\ENDIF
		\ENDFOR
		%\STATE{return $S_{a:b}$}
	\end{algorithmic}
\end{algorithm}

During the execution of our online algorithm, we maintain a set of allocations $S_{1:\tau}$ that are initially set to be empty allocations.  At any time step, the allocations in $S_{1:\tau}$ are possibly empty beyond some time in the future due to our limited knowledge of the future.   Suppose that $a$ is the start time of the next period.   Our online algorithm will call StableAllocate$(a,a+w)$, which is shown in Algorithm~\ref{alg:2}, to update $S_{a:a+w}$.   StableAllocate works by running a greedy algorithm for some of the entities.  Given an entity $e$ and an interval $[a,b]$, a greedy algorithm is described in~\cite{BEM18} to compute some assignments intervals of $e$ within $[a,b]$ that have the maximum stability value with respect to  the instances $I_{a:b}$.   We give the pseudocode, StableEntity, of this algorithm in Algorithm~\ref{alg:2-2}.   For every entity $e$, if no assignment interval in $S_{1:\tau}[e]$ starts before $a$ and contains $a$, we call StableEntity$(e,a,a+w)$ to recompute $S_{a:a+w}[e]$.  The updated allocations $S_{a:a+w}$ will serve two purposes.  First, they will determine the end time of the next period that starts at $a$.  Second, they will help us to determine the allocations that will be returned for the next period.

\begin{algorithm}[t]
	\begin{algorithmic}[1]
		\caption{StableEntity$(e,a,b)$}
		\label{alg:2-2}
		%\STATE{let $e$ be the input resource}
		\STATE{initialize $C_{a:b}[e]$ to be empty allocations}
		\STATE{$i \leftarrow a$}
		\WHILE{$i \leq b$}
		\FOR{every player $q$}
		\IF{$q \in L_i(e)$}
		\STATE{$k_q \leftarrow \max\{ k \in [i,b] : q \in \bigcap_{j=i}^{k} L_j(e)\}$}
		\ELSE
		\STATE{$k_q \leftarrow 0$}
		\ENDIF
		\ENDFOR
		\STATE{$p \leftarrow \mathrm{argmax}\{k_q : q \in P \}$}
		\IF{$k_p \geq i$}
		\STATE{add $[i,k_p]$ as an assignment interval to $C_{a:b}[e]$ and assign $e$ to $p$ during $[i,k_p]$}
		\STATE{$i \leftarrow k_{p} + 1$}			
		\ELSE
		\STATE{$i \leftarrow i + 1$}
		\ENDIF
		\ENDWHILE
		\STATE{return $C_{a:b}[e]$}
	\end{algorithmic}
\end{algorithm}

The following result gives some structural conditions under which the stability value of some allocations $Y_{a:b}[e]$ is at least the stability value of some other allocations $X_{a:b}[e]$.  The greediness of StableEntity ensures that these conditions are satisfied by its output when compared with any $X_{a:b}[e]$.  As a result, Lemma~\ref{lem:1} proves the optimality of the stability value of the output of StableEntity.  The optimality of the greedy algorithm was also proved in~\cite{BEM18}, but we make the structural conditions more explicit in Lemma~\ref{lem:1}.

\begin{lemma}
	\label{lem:1}
	Let $[a,b]$ be any time interval.  Let $X_{a:b}[e]$ and $Y_{a:b}[e]$ be two sets of assignment intervals for $e$ that are contained in $[a,b]$.  If the following conditions hold, then for every $j \in [0,b-a]$, $\lambda(X_{a:a+j}[e]) \leq \lambda(Y_{a:a+j}[e])$.
	\begin{enumerate}[{\em (i)}]
		\item The first assignment interval in $Y_{a:b}[e]$ starts no later than the first assignment interval in $X_{a:b}[e]$.
		\item If the start time of an assignment interval $J$ in $Y_{a:b}[e]$ lies in an assignment interval $J'$ in $X_{a:b}[e]$, the end time of $J$ is not less than the end time of $J'$.
		\item For every $t \in [a,b]$, if $e$ is assigned in $X_t[e]$, then $e$ is also assigned in $Y_t[e]$.
	\end{enumerate}
\end{lemma}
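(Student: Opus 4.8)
The plan is to prove, for each fixed $j$ separately, the equivalent statement $N_X(j)\le N_Y(j)$, where for a set of assignment intervals $Z_{a:b}[e]$ I write $N_Z(j)$ for the number of indices $t$ with $a\le t\le a+j-1$ such that $[t,t+1]$ is contained in an assignment interval of $Z_{a:b}[e]$, so that $\lambda(Z_{a:a+j}[e])=\Delta\cdot N_Z(j)$; say a time step $t$ is \emph{$Z$-covered} if $t$ lies in some assignment interval of $Z_{a:b}[e]$. The first and main step is the counting identity
$$N_Z(j)\;=\;c_Z(j)-k_Z(j),$$
where $c_Z(j)$ is the number of $Z$-covered time steps in $[a,a+j]$ and $k_Z(j)$ is the number of assignment intervals of $Z_{a:b}[e]$ whose start time is at most $a+j$. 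This follows by summing over assignment intervals (which are disjoint): an interval with start time $\le a+j$ contributes to $c_Z(j)$ exactly one more than it contributes to $N_Z(j)$ (whether or not it also ends inside $[a,a+j]$), while an interval with start time $>a+j$ contributes $0$ to both. So it suffices to prove $c_Y(j)-c_X(j)\ge k_Y(j)-k_X(j)$.

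To bound the right-hand side I would invoke condition (ii). \textbf{Claim:} every assignment interval $[l,r]$ of $X_{a:b}[e]$ contains the start time of at most one assignment interval of $Y_{a:b}[e]$. Indeed, if $[l'_1,r'_1]$ and $[l'_2,r'_2]$ were two $Y$-intervals with $l\le l'_1<l'_2\le r$, then since $l'_1$ lies in the $X$-interval $[l,r]$, condition (ii) gives $r'_1\ge r$; but the $Y$-intervals are disjoint and $[l'_1,r'_1]$ precedes $[l'_2,r'_2]$, so $l'_2>r'_1\ge r$, contradicting $l'_2\le r$. Hence the map sending each $Y$-interval whose start time lies in some $X$-interval to that (unique) $X$-interval is injective. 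Let $E(j)$ be the number of assignment intervals of $Y_{a:b}[e]$ with start time $\le a+j$ whose start time is \emph{not} $X$-covered; restricting the injection to intervals started by time $a+j$ shows that the $k_Y(j)-E(j)$ remaining ones map injectively into the $X$-intervals counted by $k_X(j)$, so $k_Y(j)-k_X(j)\le E(j)$.

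To bound the left-hand side I would invoke condition (iii): every $X$-covered time step is $Y$-covered, so $c_Y(j)-c_X(j)$ equals the number of time steps in $[a,a+j]$ that are $Y$-covered but not $X$-covered. Each of the $E(j)$ ``extra'' $Y$-intervals counted above has a distinct start time, which lies in $[a,a+j]$ (it is $\ge a$ because the interval is contained in $[a,b]$, and $\le a+j$ by definition of $E(j)$), is $Y$-covered, and is not $X$-covered; hence $c_Y(j)-c_X(j)\ge E(j)$. Combining, $c_Y(j)-c_X(j)\ge E(j)\ge k_Y(j)-k_X(j)$, so $N_Y(j)\ge N_X(j)$, and multiplying through by $\Delta\ge 0$ gives $\lambda(X_{a:a+j}[e])\le\lambda(Y_{a:a+j}[e])$. (Condition (i) is in fact implied by (iii): $X$ covers the start of its first assignment interval, so $Y$ does too; it is not separately needed, though one may freely use it.)

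The step I expect to be the real crux is spotting the identity $N_Z(j)=c_Z(j)-k_Z(j)$ and thereby realizing what actually has to be controlled. The naive hope that the rewarded edges of $X$ form a subset of those of $Y$ is false: $Y$ may split an assignment interval strictly inside an $X$-interval and forfeit a stability reward that $X$ collects. The identity reframes the problem as a trade-off — $Y$ may use more intervals, but then, by (ii), each $X$-interval still absorbs at most one $Y$-start, and the surplus $Y$-intervals are paid for by the surplus $Y$-covered time steps guaranteed by (iii) — after which conditions (ii) and (iii) slot in with no further work.
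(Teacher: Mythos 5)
Your proof is correct, but it takes a genuinely different route from the paper's. The paper proves the lemma by induction on $j$: it looks at the last edge $[a+j-1,a+j]$, and when that edge is rewarded in $X$ it traces back to the start $a+i$ of the $X$-interval containing it, using condition (iii) to find a $Y$-interval covering $a+i$ and condition (ii) to show that either $Y$ collects the same rewards on $[a+i,a+j]$, or $Y$ splits once inside that stretch and the single lost reward in $Y$ is matched against the reward $X$ necessarily lacks on $[a+i-1,a+i]$; the induction hypothesis then handles the prefix. You instead prove the inequality directly for each $j$ via the counting identity (rewarded edges) $=$ (covered time steps) $-$ (number of intervals started), reducing the claim to $c_Y(j)-c_X(j)\geq k_Y(j)-k_X(j)$; condition (ii) yields the injection showing each $X$-interval absorbs at most one $Y$-start, so the surplus $Y$-intervals have starts that are not $X$-covered, and condition (iii) guarantees enough surplus $Y$-covered steps to pay for them. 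Your argument is non-inductive and global, makes the exchange of ``one extra split versus one extra covered step'' completely transparent, and as a bonus shows condition (i) is redundant given (iii) (the paper invokes (i) in its $i=0$ case, though it could have avoided it the same way you do); the paper's induction is more local and is the form it later reuses when chaining stability bounds across periods in Lemma 4.2. The only hypotheses you use beyond the stated conditions are the disjointness of an entity's assignment intervals and $\Delta\geq 0$, both of which the paper guarantees (each entity is assigned to at most one player per time step, and $\Delta$ is non-negative), so your proof is complete as written.
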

\begin{proof}
	We show that $\lambda(X_{a:a+j}[e]) \leq \lambda(Y_{a:a+j}[e])$ for $j \in [0,b-a]$ by induction on $j$.  The base case of $j = 0$ is trivial as both $\lambda(X_{a:a}[e])$ and $\lambda(Y_{a:a}[e])$ are zero by definition.  Consider $a+j$ for some $j \in [1,b-a]$.  There are two cases depending the value of $\lambda(X_{a+j-1:a+j}[e])$.
	
	Case~1: $\lambda(X_{a+j-1:a+j}[e]) = 0$.  Then, $\lambda(X_{a:a+j}[e]) = \lambda(X_{a:a+j-1}[e]) \leq \lambda(Y_{a:a+j-1}[e]) \leq \lambda(Y_{a:a+j}[e])$.  
	
	Case~2: $\lambda(X_{a+j-1:a+j}[e]) = \Delta$.  Some assignment interval $J'$ in $X_{a:b}[e]$ contains $[a+j-1,a+j]$ in this case.  Let $p$ be the player to whom $e$ is assigned during $J'$.  Let $a+i$ be the start time of $J'$.  Note that $i \in [0,j-1]$.
	
	If $i = 0$, then $[a,a+j] \subseteq J'$ and $J'$ is the first assignment interval in $X_{a:b}[e]$.  By conditions~(i)~and~(ii), the first assignment interval in $Y_{a:b}[e]$ starts at $a$ and ends no earlier than $a+j$.  Therefore, $\lambda(X_{a:a+j}[e]) = \lambda(Y_{a:a+j}[e])$.  
	
	Suppose that $i > 0$.  
	%Since the assignment of $e$ changes from $a+i-1$ to $a+i$ in $X_{a:a+j}[e]$, we have $\lambda(X_{a:a+j}[e]) = \lambda(X_{a:a+i-1}[e]) + \lambda(X_{a+i:a+j}[e])$.  Also, $\lambda(X_{a:a+i-1})[e]) \leq \lambda(Y_{a:a+i-1}[e])$ by induction assumption.  
	Because $e$ is assigned to $p$ at $a+i$ in $X_{a:b}[e]$, by condition~(iii), there exists an assignment interval $J$ in $Y_{a:b}[e]$ that contains $a+i$.  So the start time of $J$ is less than or equal to $a+i$.  There are two cases.
	\begin{itemize}
		\item If the end time of $J$ is at least $a+j$, then $\lambda(X_{a+i:a+j}[e]) = \lambda(Y_{a+i:a+j}[e])$ and hence
		\begin{align*}
			\lambda(X_{a:a+j}[e]) & = \lambda(X_{a:a+i}[e]) + \lambda(X_{a+i:a+j}[e]) \\
			& \leq  \lambda(Y_{a:a+i}[e]) + \lambda(X_{a+i:a+j}[e])  
			& (\because \text{induction assumption}) \\
			& =  \lambda(Y_{a:a+i}[e]) + \lambda(Y_{a+i:a+j}[e])  \\
			& = \lambda(Y_{a:a+j}[e]).
		\end{align*}
	
		\item The other case is that $J$ ends at some time $t \in [a+i,a+j-1]$.  By condition~(ii), the start time of $J$ cannot be $a+i$, which means that the start time of $J$ is less than or equal to $a+i-1$.   Since $e$ is assigned to $p$ from $a+i$ to $a+j$ in $X_{a:b}[e]$, condition~(iii) implies that $e$ is assigned in $Y_{a:b}[e]$ at every time step in $[a+i,a+j]$.  Therefore, there is another assignment interval $K$ in $Y_{a:b}[e]$ that starts at $t+1$.  Condition~(ii) implies that the end time of $K$ is at least $a+j$.  There is thus a loss of a stability reward of $\Delta$ for $e$ from $t$ to $t+1$ in $Y_{a+i-1:a+j}[e]$, which matches the loss of a stability reward of $\Delta$ for $e$ from $a+i-1$ to $a+i$ in $X_{a+i-1:a+j}[e]$.  %Hence, $\lambda(X_{a:a+j}[e]) \leq \lambda(Y_{a:a+j}[e])$.
		As a result, $\lambda(X_{a+i-1:a+j}[e])  = \lambda(Y_{a+i-1:a+j}[e])$.
		Hence,
		\begin{align*}
			\lambda(X_{a:a+j}[e]) & = \lambda(X_{a:a+i-1}[e]) + \lambda(X_{a+i-1:a+j}[e]) \\
			& \leq  \lambda(Y_{a:a+i-1}[e]) + \lambda(X_{a+i-1:a+j}[e])  
			& (\because \text{induction assumption}) \\
			& =  \lambda(Y_{a:a+i-1}[e]) + \lambda(Y_{a+i-1:a+j}[e])  \\
			& = \lambda(Y_{a:a+j}[e]).
		\end{align*}
	\end{itemize}
\end{proof}

\subsection{Online Algorithm}

%In this section, we assume that $\Delta_{r,p}$ is equal to a common value  $\Delta$ for every resource $r$ and every player $p$.  

The pseudocode of our online algorithm, MSMaxmin, is shown in Algorithm~\ref{alg:3}.  The parameter $c_0$ in line~\ref{code:3} is a real number chosen from the range $(0,1)$ that will be specified later when we analyze the performance of MSMaxmin.  

MSMaxmin initializes $S_{1:\tau}$ to be a set of empty allocations and then iteratively computes $A_1,A_2,\ldots,A_\tau$.   At the $s$-th time step, MSMaxmin calls StableAllocate$(s,s+w)$ if $s$ is the start time of the next period.   (By default, 1 is the start time of the first period.)   This call of StableAllocate updates $S_{1:\tau}$ by changing $S_{s:s+w}$.  Afterwards, we determine the period end time $t$ using the assignment interval start and end times in $S_{s:s+w}$.  The next task is to determine the allocations $A_{s:t}$ for $I_{s:t}$.

\begin{algorithm}[t]
	\begin{algorithmic}[1]
		\caption{MSMaxmin}
		\label{alg:3}
		\STATE{$S_{1:\tau} \leftarrow \text{empty allocations}$}
		\STATE{$\mathit{period}\_\mathit{start} \leftarrow 1$}
		\FOR{$s$ = 1 to $\tau$}
		\IF{$s = \mathit{period}\_\mathit{start}$}
		\IF{$s > 1$}
		\STATE{$[r,s-1] \leftarrow \mathit{period}$ \quad\quad\quad\quad\quad /* $[r,s-1]$ is the previous period */}
		\ENDIF
		\STATE{$\mathrm{StableAllocate}(s,s+w)$}
		\STATE{$t \leftarrow \min\bigl(s+w,\min\{\beta : \text{assignment interval end time $\beta$ in $S_{s:s+w}$}\}\bigr)$}    \label{code:1}
		\STATE{$t \leftarrow \min\bigl(t, \min\{\beta-1: \text{assignment start time $\beta$ in $S_{s+1:s+w}$}\}\bigr)$}  \label{code:1-1}
		\STATE{$\mathit{period} \leftarrow [s,t]$ \quad\quad\quad\quad\quad\quad\quad\quad /* $[s,t]$ is the next period */}
		\FOR{$j$ = $s$ to $t$}  \label{code:1-3}
		\STATE{$B_j\leftarrow \text{$\rho$-approximate maxmin allocation for $I_j$}$}
		\ENDFOR		\label{code:1-4}
		\STATE{$L \leftarrow 0$}   \label{code:2-1}
		\IF{$s > 1$ and $s \leq r+w$ and $A_{s-1} = S_{s-1}$}
		\STATE{$L \leftarrow \lambda(S_{s-1:s})$}
		\ENDIF
		\STATE{$R \leftarrow 0$}
		\IF{$t < s+w$}
		\STATE{$R \leftarrow \lambda(S_{t:t+1})$}
		\ENDIF
		\IF{$\nu(B_{s:t}) \geq L + \lambda(S_{s:t}) + c_0\cdot R$}  \label{code:3}
		\STATE{$A_{s:t} \leftarrow B_{s:t}$}
		\ELSE
		\STATE{$A_{s:t} \leftarrow S_{s:t}$}
		\ENDIF
		\STATE{$\mathit{period}\_\mathit{start} \leftarrow t+1$}  \label{code:2}
		\ENDIF
		\STATE{output $A_s$}
		\ENDFOR
	\end{algorithmic}
\end{algorithm}

We invoke the $\rho$-approximation algorithm for the single-shot maxmin allocation problem.  Specifically, we compute the $\rho$-approximate allocations $B_{s:t}$ for the instances $I_{s:t}$.  That is, for each $i \in [s,t]$, $\nu(B_i) \geq \rho \cdot\nu(X_i)$ for any allocation $X_i$ for $I_i$.  The $\rho$-approximation algorithm may not take the restriction lists into account.  Nevertheless, since we assume that $v_i(e,p) = 0$ if $p \not\in L_i(e)$, we can remove such an assignment $(e,p)$ from $B_i$ without affecting $\nu(B_i)$.  Therefore, we assume without loss of generality that every $B_i$ respects the restriction lists $L_i(\cdot)$.

We set $A_{s:t}$ to be $S_{s:t}$ or $B_{s:t}$.  It is natural to check whether $\lambda(S_{s:t})$ is larger than $\nu(B_{s:t})$.  However, if $s \leq r+w$ and $A_{s-1} = S_{s-1}$, where $r$ is the start time of the previous period, then $S_s$ was computed at time $r$ and it is possible that $\lambda(S_{s-1:s}[e]) = \Delta$ for some entity $e$.  The call StableAllocate$(s,s+w)$ does not invoke StableEntity for such an entity $e$, and so $S_s[e]$ will be preserved.  Therefore, if we set $A_{s:t}$ to be $S_{s,t}$, we will gain the stability reward of $\lambda(S_{s-1:s})$.  Similarly, if $t < s+w$, then setting $A_{s:t}$ to be $S_{s,t}$ provides the opportunity to gain the stability reward of $\lambda(S_{t:t+1})$ in the future.   On the other hand, if $t = s+w$, we do not know $I_{t+1}$ at time $s$, and we do not compute $S_{t+1}$ at time $s$.  In this case, after calling StableAllocate at time $s$, all assignment intervals in the current $S_{1:\tau}$ must end at or before $t = s+w$, implying that there is no stability reward in $S_{t:t+1}$ irrespective of how we will set $S_{t+1}$ in the future.  Hence, we compare $\nu(B_{s:t})$ with $\lambda(S_{s:t})$ and possibly $\lambda(S_{s-1:s})$ and $\lambda(S_{t:t+1})$ depending on the situation.

\section{Analysis}
 
Because MSMaxmin calls StableAllocate from time to time to update $S_{1:\tau}$, the set of allocations $S_{a:b}$ for any $[a,b] \subseteq [1,\tau]$ may change over time.  To differentiate these allocations computed at different times, we introduce the notation $S_{i|s}$ to denote $S_i$ at the end of the time step $s$, i.e., at the end of the $s$-th iteration of the for-loop in MSMaxmin.   Similarly, $S_{a:b|s}$ denotes $S_{a:b}$ at the end of the time step $s$.  

First, we give some properties of the start and end times of periods and assignment intervals.

\begin{lemma}
	\label{lem:period}
	Let $s$ be a period start time.  For every entity $e$,
	\begin{enumerate}[{\em (i)}]
		
		\item  if $\alpha$ is an assignment interval end time in $S_{s:s+w|s}[e]$, then $\alpha$ will remain an assignment interval end time for $e$ in the future and $\alpha$ will be a period end time; 
		
		\item if $\alpha$ is an assignment interval start time in $S_{s+1:s+w|s}[e]$, then $\alpha-1$ will be a period end time,  $\alpha$ will remain an assignment interval start time for $e$ in the future, and $\alpha$ will be a period start time.
		
	\end{enumerate}
\end{lemma}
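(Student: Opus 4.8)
The plan is to prove (i) and (ii) together, by backward induction on the period start time $s$ (starting from the last period and working toward the first), the point being that every reduction pushes from a period to a strictly later one. Two structural observations drive everything. \emph{Freezing:} once the period $[s,t]$ has been fixed during time step $s$, the allocations $S_{1:t}$ are never changed again, because every later call to StableAllocate is of the form StableAllocate$(a,a+w)$ with $a$ a later period start, hence $a \geq t+1$, and such a call only touches $S_{a:a+w}$. \emph{Minimality of $t$:} by lines~\ref{code:1}--\ref{code:1-1}, $t$ is the smallest candidate end time in $[s,s+w]$, and the candidate end times include every assignment interval end time occurring in $S_{s:s+w|s}$ as well as $\gamma-1$ for every assignment interval start time $\gamma$ occurring in $S_{s+1:s+w|s}$; hence any such end time is $\geq t$ and any such start time is $\geq t+1$. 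We also use that the periods partition $[1,\tau]$, so the period immediately after one ending at $\beta < \tau$ starts at $\beta+1$.

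For~(i), let $\alpha$ be an assignment interval end time in $S_{s:s+w|s}[e]$, lying in an interval $[\gamma,\alpha]$. Minimality gives $\alpha \geq t$. If $\alpha = t$, then $[\gamma,\alpha] \subseteq [1,t]$ is frozen, so $\alpha$ stays an end time of $e$, and $\alpha = t$ is a period end time. If $\alpha > t$, I reduce to the period start $t+1$ (note $t+1 \leq \alpha \leq s+w < t+1+w$). When $\gamma \leq t$, the interval straddles time step $t+1$, so StableAllocate$(t+1,t+1+w)$ does not recompute $e$ and $[\gamma,\alpha]$ survives intact into $S_{t+1:t+1+w|t+1}[e]$, and the inductive hypothesis at $t+1$ applies. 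When $\gamma \geq t+1$, the interval lies strictly to the right of the current period; here I use that StableEntity is a left-to-right greedy whose entire state at a time step is just that time step, so on a common window it reproduces the same assignment intervals starting from any position it reaches, and combine this with Lemma~\ref{lem:1} to conclude that the allocation of $e$ recomputed at step $t+1$ (if recomputed at all) still contains an assignment interval with endpoint $\alpha$; the hypothesis at $t+1$ then finishes the argument.

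For~(ii), let $\alpha$ be an assignment interval start time in $S_{s+1:s+w|s}[e]$, lying in an interval $[\alpha,\delta]$. Minimality gives $\alpha \geq t+1$, so $\alpha-1 \geq t$ and $\alpha-1$ is itself a candidate end time in $[s,s+w]$. If $\alpha-1 = t$, it is the current period's end time. Otherwise, arguing as in~(i) (the prefix $[1,t]$ is frozen; beyond the current period, greedy invariance of StableEntity shows that the start position $\alpha$ is reproduced under any later recomputation, so $[\alpha,\delta]$ persists into $S_{t+1:t+1+w|t+1}[e]$), $\alpha-1$ remains a candidate end time at period start $t+1$, and a short induction over successive periods realizes $\alpha-1$ as a period end time. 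Since the periods partition $[1,\tau]$ and $\alpha-1 < \tau$, the period after it starts at $\alpha$, and the same greedy-invariance argument shows that $\alpha$ stays an assignment interval start time of $e$.

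The step I expect to be the main obstacle is the case in~(i) where an assignment interval of $e$ lies wholly to the right of the period end time $t$ and $e$ is recomputed at the next period start: the lookahead window then moves forward, so the old interval cannot simply be copied, and one must argue from the fact that the greedy choices of StableEntity depend only on the unchanging restriction lists $L_{\cdot}(e)$, using Lemma~\ref{lem:1} to pin down how the recomputed assignment intervals of $e$ compare with the earlier ones. The remaining ingredients --- freezing, minimality of $t$, and the straddling case --- amount to bookkeeping about period boundaries.
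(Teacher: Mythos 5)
Your proposal follows essentially the same route as the paper's proof, which likewise rests on three facts: StableEntity is re-invoked for $e$ at a period start only when no assignment interval of $e$ straddles that start, the left-to-right greedy therefore reproduces the previously determined boundary times inside the overlap of the old and new lookahead windows, and the candidate-end-time bookkeeping in lines~\ref{code:1}--\ref{code:1-1} turns surviving interval boundaries into period end/start times --- your explicit induction over successive periods together with the freezing and minimality observations is just a more detailed packaging of this. The only caveat is that Lemma~\ref{lem:1} compares stability values and cannot by itself certify that a recomputation reproduces a specific endpoint $\alpha$; the paper (and, in substance, your own argument) justifies that step by the determinism of the greedy on the unchanged restriction lists $L_\cdot(e)$, not by Lemma~\ref{lem:1}.
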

\begin{proof}
	Take any entity $e$.  For any period start time $s$, the call StableAllocate$(s,s+w)$ invokes StableEntity$(e,s,s+w)$ only if $e$ is unassigned at time $s$ in $S_{1:\tau|s-1}[e]$ or $s$ is an assignment interval start time in $S_{1:\tau|s-1}[e]$.  Therefore, by the greediness of StableEntity, if a time step $\alpha\in [s,s+w]$ was already determined by some previous call of StableEntity on $e$ as an assignment interval start or end time, that decision will remain the same in this call StableEntity$(e,s,s+w)$.  Therefore, if $\alpha$ is an assignment interval end time for $e$ in $[s,s+w]$ after calling StableAllocate$(s,s+w)$, it will remain an assignment interval end time for $e$ in the future.  It will also be a candidate end time used in line~\ref{code:1} of MSMaxmin in the $s$-th iteration of the for-loop and thereafter until $\alpha$ becomes the next period end time.  We can similarly argue that if $\alpha$ is an assignment interval start time for $e$ in $[s,s+w]$ after calling StableAllocate$(s,s+w)$, it will remain an assignment interval start time for $e$ in the future.  Also, $\alpha-1$ will be a candidate end time used in line~\ref{code:1-1} of MSMaxmin until it becomes the next period end time.  When this happens, $\alpha$ will be made the subsequent period start time in line~\ref{code:2} of MSMaxmin.
\end{proof}

Next, we show that for every entity $e$, the stability value of any set of allocations $X_{1:t}[e]$ cannot be much larger than that of $S_{1:t|\alpha}[e]$, where $\alpha$ is the largest assignment interval start time in $S_{1:\tau|t}[e]$ that is less than or equal to $t$, provided that $t \leq \alpha+w$.

\begin{lemma}
	\label{lem:2}
	Let $X_{1:\tau}$ be a set of allocations for $I_{1:\tau}$.  Let $t$ be any time step.   Let $\alpha$ be the largest assignment interval start time in $S_{1:\tau|t}[e]$ that is less than or equal to $t$.  If $t \leq \alpha+w$, then $\lambda(X_{1:t}[e]) \leq \frac{w+1}{w}\lambda(S_{1:\alpha|\alpha}[e]) + \lambda(S_{\alpha:t|\alpha}[e])$.
\end{lemma}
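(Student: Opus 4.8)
The plan is to split the stability reward of $X$ at the step $\alpha$, writing $\lambda(X_{1:t}[e]) = \lambda(X_{1:\alpha}[e]) + \lambda(X_{\alpha:t}[e])$, and to establish the two bounds $\lambda(X_{\alpha:t}[e]) \leq \lambda(S_{\alpha:t|\alpha}[e])$ and $\lambda(X_{1:\alpha}[e]) \leq \frac{w+1}{w}\lambda(S_{1:\alpha|\alpha}[e])$ separately; adding them gives the lemma. The first bound is the easy half. Since $\alpha$ is an assignment interval start time of $e$, Lemma~\ref{lem:period} makes $\alpha$ a period start time, and I would first check that StableAllocate$(\alpha,\alpha+w)$ really does invoke StableEntity$(e,\alpha,\alpha+w)$ at time $\alpha$: otherwise $e$ is strictly interior to some older assignment interval at $\alpha$, and since StableEntity is re-invoked on $e$ only at steps where $e$ is not interior to an interval, that situation persists forever, contradicting that $\alpha$ is a start time of $e$. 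Hence $S_{\alpha:\alpha+w|\alpha}[e]$ is the greedy output on $I_{\alpha:\alpha+w}$, and since $t \leq \alpha+w$, Lemma~\ref{lem:1} applied with the truncation of $X$ to $[\alpha,\alpha+w]$ in the role of $X_{\alpha:\alpha+w}[e]$, the greedy output in the role of $Y_{\alpha:\alpha+w}[e]$ (whose hypotheses (i)--(iii) hold, as noted after that lemma), and $j = t-\alpha$, yields $\lambda(X_{\alpha:t}[e]) \leq \lambda(S_{\alpha:t|\alpha}[e])$.

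For the second bound, let $\gamma_1 = 1 < \gamma_2 < \dots < \gamma_\ell = \alpha$ be the period start times at which StableEntity is (re-)invoked on $e$, so that $S_{1:\alpha|\alpha}[e]$ is the concatenation of the greedy outputs on the windows $[\gamma_i,\gamma_i+w]$, the $(i+1)$-st run overwriting only the portion from $\gamma_{i+1}$ onward. Two structural facts are needed, both deduced from Lemma~\ref{lem:period} and the period-end rule of lines~\ref{code:1}--\ref{code:1-1}: (a) $\gamma_{i+1} \leq \gamma_i+w+1$, because once the interval of $e$ that is live at the end of the $i$-th period terminates, the next period start re-invokes StableEntity on $e$; and (b) each step $(\gamma_i-1,\gamma_i)$ with $i\geq 2$ carries no stability reward in $S_{1:\alpha|\alpha}[e]$, since a re-invocation at $\gamma_i$ forces no interval of $S$ to span both $\gamma_i-1$ and $\gamma_i$. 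Telescoping $[1,\alpha]$ along the $\gamma_i$'s, applying Lemma~\ref{lem:1} to the window $[\gamma_i,\gamma_i+w]$ on the sub-interval $[\gamma_i,\gamma_{i+1}-1]$ (legitimate by (a)), and using (b) to cancel the boundary steps on the $S$ side, gives
\[
\lambda(X_{1:\alpha}[e]) \;\leq\; \lambda(S_{1:\alpha|\alpha}[e]) + \sum_{i=2}^{\ell}\lambda(X_{\gamma_i-1:\gamma_i}[e]),
\]
with each term of the last sum at most $\Delta$.

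The crux is bounding $\sum_{i=2}^{\ell}\lambda(X_{\gamma_i-1:\gamma_i}[e])$ by $\frac{1}{w}\lambda(S_{1:\alpha|\alpha}[e])$. A term fails to be already absorbed only when $\gamma_i = \gamma_{i-1}+w+1$: if $\gamma_i \leq \gamma_{i-1}+w$ instead, the step $(\gamma_i-1,\gamma_i)$ lies inside window $i-1$, so whatever $X$ collects on it is cancelled by the Lemma~\ref{lem:1} inequality for that window, whose right side vanishes there by (b). When $\gamma_i = \gamma_{i-1}+w+1$, I would analyze how the $(i-1)$-st period ended: any assignment start time of $e$ strictly inside window $i-1$ would, via line~\ref{code:1-1}, create a strictly earlier candidate period end, so the interval of $S$ that ends at $\gamma_{i-1}+w$ must begin exactly at $\gamma_{i-1}$; thus the whole window $[\gamma_{i-1},\gamma_{i-1}+w]$ is a single assignment interval of $e$ in $S_{1:\alpha|\alpha}[e]$ and contributes $w\Delta$ to its stability value. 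These full windows are pairwise disjoint (the next re-invocation already exceeds $\gamma_{i-1}+w$), so the number of such boundaries is at most $\lambda(S_{1:\alpha|\alpha}[e])/(w\Delta)$, and the desired bound follows. Combining the two halves completes the proof.

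I expect the main obstacle to be the structural claims (a) and (b) and, especially, the ``full window'' claim above: they require tracking precisely which StableEntity call last wrote each part of $S_{1:\alpha|\alpha}[e]$, and reasoning about the dynamically generated period boundaries by combining Lemma~\ref{lem:period} with the period-end rule. Once those are settled, the telescoping and the amortization — $w$ units of stability reward banked for every unit potentially lost at a window boundary, hence the factor $\frac{w+1}{w}$ — are routine.
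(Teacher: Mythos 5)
Your proposal is correct, and it rests on exactly the same two ingredients as the paper's proof — per-window optimality of the greedy output via Lemma~\ref{lem:1}, and the amortization in which a boundary loss of $\Delta$ is charged against a full window of $e$ worth $w\Delta$ in $S$ — but you organize the argument differently: the paper proves the bound by induction on the assignment-interval start times of $e$, carrying $\lambda(X_{1:s}[e]) \leq \frac{w+1}{w}\lambda(S_{1:\alpha|\alpha}[e]) + \lambda(S_{\alpha:s|\alpha}[e])$ as the inductive invariant and splitting into the cases $\gamma \leq \alpha+w$, $\gamma \geq \alpha+w+1$ with $\beta<\alpha+w$, and $\beta=\alpha+w$ with $\gamma=\beta+1$, whereas you split cleanly at $\alpha$ and handle the prefix by a non-inductive telescoping over the StableEntity re-invocation times $\gamma_1<\dots<\gamma_\ell=\alpha$, absorbing the $\frac1w$ overhead globally via the disjointness of the ``full windows.'' Your Cases A and B correspond exactly to the paper's cases, and your facts (a) and (b) are implicit in the paper's use of Lemma~\ref{lem:period} and of the equation $S_{1:\beta|\alpha}[e]=S_{1:\beta|\gamma}[e]$; what your route buys is a cleaner statement of the amortization (at most one charged boundary per disjoint full window), at the cost of having to verify the persistence of $S$-rewards across later re-invocations, which the paper records explicitly. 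One small imprecision to patch: when $\gamma_i=\gamma_{i-1}+w+1$, the window $[\gamma_{i-1},\gamma_{i-1}+w]$ need not be a single assignment interval of $e$ — it can be entirely empty, namely when $L_s(e)=\emptyset$ throughout the window and the period starting at $\gamma_{i-1}$ runs the full $w+1$ steps. In that situation your ``full window'' claim fails, but so does the loss: since the greedy assigns $e$ whenever some player is admissible, emptiness of the window forces $L_{\gamma_i-1}(e)=\emptyset$, so $X$ cannot assign $e$ at $\gamma_i-1$ and $\lambda(X_{\gamma_i-1:\gamma_i}[e])=0$. You should therefore restrict the counting to boundaries at which $X$ actually collects $\Delta$ (for those, admissibility at $\gamma_i-1$ forces the greedy interval ending at $\gamma_{i-1}+w$ to exist and, by your line~\ref{code:1-1} argument, to start at $\gamma_{i-1}$); this is the analogue of the paper's explicit treatment of gaps with $L_s(e)=\emptyset$.
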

\begin{proof}
	We prove the lemma by induction on the assignment interval start time $\alpha$.  In the base case, $\alpha$ is the smallest assignment interval start time for $e$ computed by StableEntity.  It follows from the greediness of StableEntity that $L_s(e) = \emptyset$ for all $s \in [1,\alpha-1]$, which implies that $X_{1:\alpha-1}[e]$ is empty.  By Lemma~\ref{lem:period}, $\alpha$ is a period start time, so MSMaxmin calls StableAllocate$(\alpha,\alpha+w)$ at $\alpha$ which calls StableEntity$(e,\alpha,\alpha+w)$.  By the greediness of StableEntity and Lemma~\ref{lem:1}, for every time step $t \in [\alpha,\alpha+w]$, $\lambda(X_{\alpha:t}[e]) \leq \lambda(S_{\alpha:t|\alpha}[e])$.  Hence, $\lambda(X_{1:t}[e]) = \lambda(X_{\alpha:t}[e]) \leq \lambda(S_{\alpha:t|\alpha}[e]) =\lambda(S_{1:t|\alpha}[e])$, i.e., the base case is true.
	
	Consider the induction step.  Let $\gamma$ be the largest assignment interval start time in $S_{1:\tau|t}[e]$ that is less than or equal to $t$.   To prove the lemma, we are only concerned with the case of $t \leq \gamma+w$.   
	%By definition, $\gamma$ must be the start time of some period, and 
	By Lemma~\ref{lem:period}, $\gamma$ is a period start time, so MSMaxmin calls StableAllocate$(\gamma,\gamma+w)$ at $\gamma$ which calls StableEntity$(e,\gamma,\gamma+w)$.   By the greediness of StableEntity and Lemma~\ref{lem:1}, we get 
	\begin{equation}
		\lambda(X_{\gamma:t}[e]) \leq \lambda(S_{\gamma:t|\gamma}[e]).  \label{eq:2-4}
	\end{equation}
	Let $[\alpha,\beta]$ be the assignment interval in $S_{1:\tau|\gamma}[e]$ before $\gamma$.   Note that $\beta \leq \alpha+w$.  By Lemma~\ref{lem:period}, $\alpha$ is a period start time.  So MSMaxmin calls StableAllocate$(\alpha,\alpha+w)$ at $\alpha$ which calls StableEntity$(e,\alpha,\alpha+w)$.  The call StableEntity$(e,\gamma,\gamma+w)$ at $\gamma$ cannot modify assignment intervals that end before $\gamma$.  Also, as StableAllocate does not call StableEntity for $e$ within $[\alpha+1,\beta]$, $\alpha$ is the last time before $\gamma$ at which the assignment intervals for $e$ was updated by a call of StableEntity.  Therefore,
	\begin{equation}	
		S_{1:\beta|\alpha}[e] = S_{1:\beta|\gamma}[e].   \label{eq:2-2-1}
	\end{equation} 
	%Assume inductively that the lemma is true for all $\alpha$.  At any time step $s$, MSMaxmin does not modify any assignment interval in $S_{1:\tau|s}$ that contains $s$.  So if $s \in [\alpha,\beta]$, then $S_{\alpha:\beta|\alpha}[e] = S_{\alpha:\beta|s}[e]$.  By the greediness of StableEntity, if $e$ is unassigned in $S_{1:\tau|s}[e]$, $L_s(e)$ must be empty.  Therefore, if $\beta+1\leq\gamma-1$, then $S_{\beta+1:\gamma-1|s}[e] = S_{\beta+1:\gamma-1|\alpha}[e]$ are empty allocations.  Hence, the induction assumption implies that
	We claim that:
	\begin{equation}
		\forall\, s \in [\alpha,\gamma-1], \quad \lambda(X_{1:s}[e]) \leq \frac{w+1}{w}\lambda(S_{1:\alpha|\alpha}[e]) + \lambda(S_{\alpha:s|\alpha}[e]).
		\label{eq:2-1}
	\end{equation}
	For $s \in [\alpha,\beta]$, we have $\lambda(X_{1:s}[e]) \leq \frac{w+1}{w}\lambda(S_{1:\alpha|\alpha}[e]) + \lambda(S_{\alpha:s|\alpha}[e])$ by the induction assumption.  If $\beta+1\leq \gamma-1$, by the greediness of StableEntity, it must be the case that $L_s(e) = \emptyset$ for all $s \in [\beta+1,\gamma-1]$ so that StableEntity does not assign $e$ to any player during $[\beta+1,\gamma-1]$.  Therefore, $X_{\beta+1:s}[e]$ is empty for all $s \in [\beta+1,\gamma-1]$, which means that $\lambda(X_{1:s}[e]) = \lambda(X_{1:\beta}[e]) \leq \frac{w+1}{w}\lambda(S_{1:\alpha|\alpha}[e]) + \lambda(S_{\alpha:\beta|\alpha}[e]) =  \frac{w+1}{w}\lambda(S_{1:\alpha|\alpha}[e]) + \lambda(S_{\alpha:s|\alpha}[e])$.  Hence, \eqref{eq:2-1} holds.
	
	Suppose that $\gamma \leq \alpha+w$.   In this case, we know the instances $I_{\alpha:\gamma}$ at time $\alpha$.  By the greediness of StableEntity and Lemma~\ref{lem:1}, we have 
	\begin{equation}
		\lambda(X_{\alpha:\gamma}[e]) \leq \lambda(S_{\alpha:\gamma|\alpha}[e]).  \label{eq:2-2}
	\end{equation}
	%After the computation at time $\alpha$, we knew that $\gamma$ is the start time of the next assignment interval for $e$.  
	%Therefore, there is no stability reward for $e$ from $\beta$ to $\gamma$.  Also, 
	As $\gamma \leq \alpha+w$, after calling StableEntity$(e,\alpha,\alpha+w)$ at $\alpha$, we already know that $\gamma$ is the start time of the next assignment interval for $e$.  
	%(The assignment interval in $S_{1:\tau|\alpha}[e]$ that starts at $\gamma$ may be modified after time $\alpha$, but its start time remains at $\gamma$.)  
	Therefore, 
	%$L_s(e) = \emptyset$ for all $s \in [\beta+1,\gamma-1]$, which means that no allocation of $e$ can have any stability reward from $\beta$ to $\gamma$.
	there is no stability reward for $e$ from $\beta$ to $\gamma$ in $S_{\alpha:\gamma|\alpha}[e]$.
	Then, it follows from \eqref{eq:2-2-1} that 
	\begin{equation}
		\lambda(S_{\alpha:\gamma|\alpha}[e]) = \lambda(S_{\alpha:\gamma|\gamma}[e]).  \label{eq:2-3}
	\end{equation}
	Hence,
	\begin{align*}
		\lambda(X_{1:t}[e]) 
		& = \lambda(X_{1:\alpha}[e]) + \lambda(X_{\alpha:\gamma}[e]) + \lambda(X_{\gamma:t}[e]) \\
		& \leq \frac{w+1}{w}\lambda(S_{1:\alpha|\alpha}[e]) + \lambda(S_{\alpha:\gamma|\alpha}[e]) + \lambda(S_{\gamma:t|\gamma}[e])  & (\because \eqref{eq:2-4}, \eqref{eq:2-1}, \text{and}~\eqref{eq:2-2}) \\
		& = \frac{w+1}{w}\lambda(S_{1:\alpha|\gamma}[e]) + \lambda(S_{\alpha:\gamma|\gamma}[e]) + \lambda(S_{\gamma:t|\gamma}[e])   & (\because \eqref{eq:2-2-1}~\text{and}~\eqref{eq:2-3}) \\
		& \leq \frac{w+1}{w}\lambda(S_{1:\gamma|\gamma}[e]) + \lambda(S_{\gamma:t|\gamma}[e]).
	\end{align*}
	
	Suppose that $\gamma \geq \alpha+w+1$.  If $\beta < \alpha+w$, there is a gap $[\beta+1,\gamma-1]$ during which StableEntity does not assign $e$ to any player.  It means that $L_s(e) = \emptyset$ for all $s \in [\beta+1,\gamma-1]$.  Therefore, $e$ is also unassigned in $X_{\beta+1:\gamma-1}$ and we can conclude that
	\begin{align*}
		\lambda(X_{1:t}[e]) 
		& = \lambda(X_{1:\beta}[e]) +\lambda(X_{\gamma:t}[e]) \\
		& \leq \frac{w+1}{w}\lambda(S_{1:\alpha|\alpha}[e]) + \lambda(S_{\alpha:\beta|\alpha}[e]) + \lambda(S_{\gamma:t|\gamma}[e]) &  (\because \eqref{eq:2-4}~\text{and}~\eqref{eq:2-1}) \\
		& \leq \frac{w+1}{w}\lambda(S_{1:\beta|\gamma}[e]) + \lambda(S_{\gamma:t|\gamma}[e]) 
		& (\because \eqref{eq:2-2-1}) \\
		& = \frac{w+1}{w}\lambda(S_{1:\gamma|\gamma}[e]) + \lambda(S_{\gamma:t|\gamma}[e]).
		& (\because \forall s \in [\beta+1,\gamma-1],\, L_s(e) = \emptyset)
	\end{align*}
	The remaining case is that $\beta \geq \alpha+w$.  At time $\alpha$, MSMaxmin computes assignment intervals up to time $\alpha+w$ only.  It follows that $\beta = \alpha+w$, which implies that $\lambda(S_{\alpha:\beta|\alpha}[e]) = w\Delta$.  If the interval $[\beta+1,\gamma-1]$ is not empty, $e$ must be unassigned in $X_{\beta+1:\gamma-1}$ as we argued previously.  We can thus conclude as in the above that 	$\lambda(X_{1:t}[e]) \leq \frac{w+1}{w}\lambda(S_{1:\gamma|\gamma}[e]) + \lambda(S_{\gamma:t|\gamma}[e])$.  Suppose that $[\beta+1,\gamma-1]$ is empty.  It means that $\gamma = \beta+1$.  Then,
	\begin{align*}
		& \lambda(X_{1:t}[e]) \\
		& = \lambda(X_{1:\beta}[e]) + \lambda(X_{\beta:\beta+1}[e]) + \lambda(X_{\gamma:t}[e]) \\
		& \leq \frac{w+1}{w}\lambda(S_{1:\alpha|\alpha}[e]) + \lambda(S_{\alpha:\beta|\alpha}[e]) + \lambda(X_{\beta:\beta+1}[e]) + \lambda(S_{\gamma:t|\gamma}[e]) & (\because \eqref{eq:2-4}~\text{and}~\eqref{eq:2-1})\\
		& \leq \frac{w+1}{w}\lambda(S_{1:\alpha|\alpha}[e]) + \lambda(S_{\alpha:\beta|\alpha}[e]) + \Delta + \lambda(S_{\gamma:t|\gamma}[e]) 
		& (\because \lambda(X_{\beta:\beta+1}[e]) \leq \Delta) \\
		& = \frac{w+1}{w}\lambda(S_{1:\alpha|\alpha}[e]) + \frac{w+1}{w}\lambda(S_{\alpha:\beta|\alpha}[e]) + \lambda(S_{\gamma:t|\gamma}[e]) 
		& (\because \lambda(S_{\alpha:\beta|\alpha}[e]) = w\Delta) \\
		& = \frac{w+1}{w}\lambda(S_{1:\gamma|\gamma}[e]) + \lambda(S_{\gamma:t|\gamma}[e]).  
		& (\because \eqref{eq:2-2-1})
	\end{align*}
\end{proof}

%To bound the competitive ratio of the allocations $A_{1:\tau}$ returned by MSMaxmin, our strategy is the show that $\lambda(A_{1:\tau}) + \nu(A_{1:\tau})$ is bounded from below by $\frac{1}{2}\lambda(S_{1:\tau|\tau}) + \frac{1}{2}\nu(B_{1:\tau})$, where $B_{1:\tau}$ are the $\rho$-approximate maxmin allocations computed for $I_{1:\tau}$ in line~9 of MSMaxmin.   Then, we can compare $\frac{1}{2}\lambda(S_{1:\tau|\tau}) + \frac{1}{2}\nu(B_{1:\tau})$ with $\lambda(O_{1:\tau}) + \nu(O_{1:\tau})$, where $O_{1:\tau}$ is the optimal set of allocations.  Lemma~\ref{lem:2} guarantees that $\lambda(S_{1:\tau}) \geq \frac{w}{w+1}\lambda(O_{1:\tau})$, whereas the $\rho$-approximation algorithm guarantees that $\nu(B_{1:\tau}) \geq \rho \cdot \nu(O_{1:\tau})$.

We are ready to analyze the performance of MSMaxmin.  It depends on the parameter $c_0$ in line~\ref{code:3} of MSMaxmin which will be set based on the values of $\rho$ and $w$.

\begin{theorem}
	\label{thm:main}
		\emph{MSMaxmin} takes $O(wmn\log (wn) + w \cdot T(m,n))$ time at each period start time and $O(m)$ time at any other time step.  The total time taken by \emph{MSMaxmin} in running $\mathcal{A}$ for the entire time horizon $[1,\tau]$ is $O(\tau \cdot T(m,n))$.  Let $A_{1:\tau}$ be the solution returned by \emph{MSMaxmin}.  Then, $\lambda(A_{1:\tau}) + \nu(A_{1:\tau}) \geq \frac{wc_0^2}{w+1}\cdot\lambda(O_{1:\tau}) + (1-c_0)\rho \cdot \nu(O_{1:\tau})$, where $O_{1:\tau}$ is the optimal offline solution.  Hence, the competitive ratio is $(1-c_0)\rho$, where $c_0$ is the positive root of the equation $wc^2 = \rho (w+1)(1-c)$, that is, 
	\[
	c_0 = \frac{\sqrt{\rho^2(w+1)^2 + 4\rho w(w+1)} - \rho(w+1)}{2w}.
	\]
\end{theorem}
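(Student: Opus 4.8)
The plan is to verify the three claims of the theorem in sequence, the running‑time bounds and the competitive‑ratio derivation being short and the stability estimate being the substance. For the running time, at a period start $s$ the call $\mathrm{StableAllocate}(s,s+w)$ runs $\mathrm{StableEntity}$ at most once per entity on the window of $w+1$ instances; maintaining the $\mathrm{argmax}$ over players with a priority queue makes each such call cost $O(wn\log(wn))$, hence $O(wmn\log(wn))$ in total. The at most $w+1$ invocations of $\mathcal A$ in lines~\ref{code:1-3}--\ref{code:1-4} cost $O(w\cdot T(m,n))$, and forming $\nu(B_{s:t})$, $\lambda(S_{s:t})$, $L$, $R$ and the test of line~\ref{code:3} costs $O(wmn)$; at a step that is not a period start one only outputs $A_s$, in $O(m)$ time. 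Because each $j\in[1,\tau]$ has exactly one allocation $B_j$ computed for it, in the unique period containing $j$, the $\mathcal A$‑time over the whole horizon is $O(\tau\cdot T(m,n))$.

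For the key inequality I would write $[1,\tau]$ as the concatenation of the periods $[s_1,t_1],\dots,[s_N,t_N]$ produced by MSMaxmin, so that $\nu(A_{1:\tau})=\sum_k\nu(A_{s_k:t_k})$ and $\lambda(A_{1:\tau})=\sum_k\lambda(A_{s_k:t_k})+\sum_{k<N}\lambda(A_{t_k:t_k+1})$, the second sum collecting the rewards that bridge successive periods. Call period $k$ \emph{greedy} if MSMaxmin sets $A_{s_k:t_k}=S_{s_k:t_k}$ and \emph{approximate} otherwise, and write $L_k,R_k$ for the values of $L,R$ used in period $k$. From line~\ref{code:3} and $\nu(B_j)\ge\rho\,\nu(O_j)$ for every $j$: if period $k$ is approximate then $\nu(A_{s_k:t_k})=\nu(B_{s_k:t_k})\ge\max\{\rho\,\nu(O_{s_k:t_k}),\;L_k+\lambda(S_{s_k:t_k})+c_0R_k\}$; if it is greedy then $\lambda(A_{s_k:t_k})=\lambda(S_{s_k:t_k})$ and $L_k+\lambda(S_{s_k:t_k})+c_0R_k>\nu(B_{s_k:t_k})\ge\rho\,\nu(O_{s_k:t_k})$. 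Using Lemma~\ref{lem:period} I would then pin down the bridging rewards: $R_k=\lambda(S_{t_k:t_k+1})$ equals the reward $\lambda(A_{t_k:t_k+1})$ that MSMaxmin actually realises whenever period $k+1$ is again greedy and is zero otherwise, while $L_{k+1}$ equals $R_k$ when period $k$ is greedy and equals $0$ when period $k$ is approximate, so the $L$'s and $R$'s pair up consecutively. For the optimum I would invoke Lemma~\ref{lem:2} with $X=O$ at the period endpoints: for each entity $e$, letting $\alpha$ be its last assignment‑interval start time and $t$ the end of that interval clipped to $\tau$ (so $t\le\alpha+w$ by the length bound on assignment intervals, the tail on which $L_\cdot(e)=\emptyset$ contributing nothing to $\lambda(O_{1:\tau}[e])$), Lemma~\ref{lem:2} gives $\lambda(O_{1:\tau}[e])\le\frac{w+1}{w}\lambda(S_{1:\alpha|\alpha}[e])+\lambda(S_{\alpha:t|\alpha}[e])$; summing over $e$ and regrouping by periods bounds $\lambda(O_{1:\tau})$ by $\frac{w+1}{w}$ times the greedy stability value plus bridging terms.

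The amortised accounting then splits, in each period, the guaranteed quantity $\nu(A_{s_k:t_k})+\lambda(A_{s_k:t_k})$ (together with the bridging reward entering from the left) into a portion that discharges the demand $(1-c_0)\rho\,\nu(O_{s_k:t_k})$ through the test inequalities recorded above, and a portion of size proportional to $c_0^2\,\lambda(S_{s_k:t_k})$ that feeds the Lemma~\ref{lem:2} bound; the surplus factor $c_0^2$ beyond the $\frac{w}{w+1}$ coming from Lemma~\ref{lem:2} is what produces the coefficient $\frac{wc_0^2}{w+1}$. The $L_k$ and $c_0R_k$ terms are carried as credits: each maximal run of consecutive greedy periods realises all of its internal bridging rewards, while the two bridges at the ends of the run, and the $L$‑credit needed by the first greedy period following an approximate one, are paid for by the flanking approximate periods' surpluses $\nu(B_{s:t})\ge L+\lambda(S_{s:t})+c_0R$ (and are vacuous at the ends of the horizon, where $R=0$). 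Summing over $k$ yields $\lambda(A_{1:\tau})+\nu(A_{1:\tau})\ge\frac{wc_0^2}{w+1}\lambda(O_{1:\tau})+(1-c_0)\rho\,\nu(O_{1:\tau})$. Finally, choosing $c_0$ with $wc_0^2=\rho(w+1)(1-c_0)$ equalises the two coefficients, so the right‑hand side becomes $(1-c_0)\rho\bigl(\lambda(O_{1:\tau})+\nu(O_{1:\tau})\bigr)=(1-c_0)\rho\cdot\opt$; solving the quadratic $wc^2+\rho(w+1)c-\rho(w+1)=0$ gives the stated closed form for $c_0$ and the competitive ratio $(1-c_0)\rho$.

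The step I expect to be the main obstacle is exactly this amortised accounting: making the $L_k$ and $c_0R_k$ credits telescope without over‑counting a bridging reward that period $k$ and period $k+1$ both rely on (the discount $c_0$ on $R$ in the test of line~\ref{code:3} is precisely what must be shown to prevent this), and checking that the powers of $c_0$ introduced by the per‑period split are consistent with the amounts promised to and consumed from neighbouring periods, so that both global coefficients come out to their claimed values. A lesser subtlety is the application of Lemma~\ref{lem:2} at the period boundaries: verifying its hypothesis $t\le\alpha+w$ and handling entities whose restriction lists eventually empty out, so that $O$ cannot do better than $S$ on their tails.
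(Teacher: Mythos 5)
Your overall architecture is the same as the paper's: the running-time accounting (greedy recomputation once per entity per period start, $O(w\cdot T(m,n))$ calls of $\mathcal{A}$ per period, one $B_j$ per time step over the horizon), the decomposition of $[1,\tau]$ into the algorithm's periods, the classification of each period by the outcome of the test in line~\ref{code:3}, the use of Lemma~\ref{lem:period} to argue that boundary rewards computed at one period start persist and coincide with the $L$ read at the next, and the final application of Lemma~\ref{lem:2} with $X=O$ (handling the tail where $L_\cdot(e)=\emptyset$) to trade $\lambda(S_{1:\tau|\tau})$ for $\frac{w}{w+1}\lambda(O_{1:\tau})$, followed by balancing the two coefficients via the quadratic $wc^2=\rho(w+1)(1-c)$. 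All of this matches the paper. (A small implementation point: a plain priority queue over players does not directly answer ``among intervals containing $i$, which has the largest right endpoint''; the paper uses a priority search tree for this stabbing-max query, though the asymptotic bound you state is the same.)

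The genuine gap is the step you yourself flag as the main obstacle: the per-period splitting that simultaneously leaves coefficient $1-c_0$ on every $\nu(B_{s:t})$ and coefficient at least $c_0^2$ on every unit of stability value, in particular on every cross-period bridging reward. Your credit scheme over maximal runs of greedy periods asserts that each period discharges $(1-c_0)\rho\,\nu(O_{s_k:t_k})$ and reserves $c_0^2\lambda(S_{s_k:t_k})$, but in a greedy period the realized contribution is only $L_k+\lambda(S_{s_k:t_k|s_k})$, which the test bounds below merely by $\nu(B_{s_k:t_k})-c_0R_k$; so part of the $\rho\,\nu(O)$ demand must be paid by bridging credits at the very moment a fraction of the stability value is being reserved for the $\lambda(O)$ side, and nothing in your sketch fixes the split fractions or shows they are globally consistent. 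The paper resolves this not by run-based charging but by introducing a second parameter $c_1$: in either branch of line~\ref{code:3} the period's realized value dominates the convex combination $(1-c_1)\bigl(\text{left bridge}+\lambda(S_{s:t|s})\bigr)+c_1\nu(B_{s:t})$ with a signed right-bridge correction, $-c_0c_1\cdot\lambda(S_{t:t+1|s})$ in the greedy branch and $+c_0(1-c_1)\cdot\lambda(S_{t:t+1|s})$ in the approximate branch (inequalities \eqref{eq:thm0}--\eqref{eq:thm7}); summing over periods, each bridging reward ends up with coefficient $1-c_1-c_0c_1$ or $c_0(1-c_1)$, and setting $c_1=1-c_0$ balances these so that both equal $c_0^2$ while the $\nu$ terms carry $c_1=1-c_0$. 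Without this (or an equivalent, fully specified charging argument), the claimed bound $\lambda(A_{1:\tau})+\nu(A_{1:\tau})\geq\frac{wc_0^2}{w+1}\lambda(O_{1:\tau})+(1-c_0)\rho\,\nu(O_{1:\tau})$ is asserted rather than derived. Two further points your pairing of $L_{k+1}$ with $R_k$ glosses over, both needed and both easy: one must check entity by entity that $\lambda(S_{s-1:s|r})=\lambda(S_{s-1:s|s})$ (the reward measured at time $r$ is the same quantity read after StableAllocate runs at time $s$), and that when $t_k=s_k+w$ there is provably no bridging reward, so setting $R_k=0$ and $L_{k+1}=0$ loses nothing.
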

\begin{proof}
Let $s$ be the current time step.  If $s$ is not a period start time, MSMaxmin spends $O(m)$ time just to output $A_s$.  Suppose that $s$ is a period start time.   After calling StableAllocate$(s,s+w)$, we obtain $O(wm)$ assignment interval start and end times in $S_{s:s+w|s}$.  Selecting the next period end time in lines~\ref{code:1} and~\ref{code:1-1} can be done in $O(wm)$ time.  Running $\mathcal{A}$ in lines~\ref{code:1-3}--\ref{code:1-4} take $O(w \cdot T(m,n))$ time.  Lines~\ref{code:2-1}--\ref{code:2} clearly take $O(wm)$ time.  It remains to analyze the running time of the call StableAllocate$(s,s+w)$.  

Take an entity $e$ for which StableAllocate will call StableEntity.  We describe an efficient implementation of StableEntity as follows.  For every player $p$, we can construct the maximal interval(s) within $[s,s+w]$ in which $e$ can be assigned to $p$.  There are fewer than $w$ intervals for $p$.   We store these intervals for all players in a priority search tree $T_e$~\cite{mccrieght85}.  The tree $T_e$ uses $O(wn)$ space and can be organized in $O(wn\log (wn))$ time.  Given a time step~$i$, $T_e$ can be queried in $O(\log(wn))$ time to retrieve the interval that contains $i$ and has the largest right endpoint.  This capability is exactly what we need for determining $k_p$ in lines~4--11 of StableEntity.  It follows that the while-loop in StableEntity takes $O(w\log(wn))$ time.  The total running time of StableEntity for $e$ is thus $O(wn\log(wn))$, implying that StableAllocate takes $O(wmn\log(wn))$ time.  This completes the running time analysis.

We analyze the competitive ratio as follows.  Consider a period $[s,t]$.  Let $[r,s-1]$ be the period before $[s,t]$.  There are several cases.

\begin{itemize}
	
\item Case~1: $s > 1$ and $s \leq r+w$ and $A_{s-1} = S_{s-1|r}$. %(which is equal to $S_{s-1|s}$).

\begin{itemize}
	\item Case~1.1: $t < s+w$.   MSMaxmin compares $\lambda(S_{s-1:t|s}) + c_0\cdot\lambda(S_{t:t+1|s})$ with $\nu(B_{s:t})$. If $\lambda(S_{s-1:t|s}) + c_0\cdot\lambda(S_{t:t+1|s}) > \nu(B_{s:t})$, MSMaxmin sets $A_{s:t}$ to be $S_{s:t|s}$.  Because $s \leq r+w$, $A_{s} = S_{s|s}$ allows us to collect the stability reward of $\lambda(S_{s-1:s|s})$, which makes the contribution of $A_{s:t}$ to $\lambda(A_{1:\tau}) + \nu(A_{1:\tau})$ greater than or equal to
	\begin{equation} 
			\lambda(S_{s-1:t|s}) > (1-c_1)\cdot\lambda(S_{s-1:t|s})  - c_0c_1\cdot\lambda(S_{t:t+1|s}) + c_1\cdot\nu(B_{s:t}).    \label{eq:thm0}
	\end{equation}
	If $\lambda(S_{s-1:t|s}) + c_0\cdot\lambda(S_{t:t+1|s}) \leq \nu(B_{s:t})$, MSMaxmin sets $A_{s:t}$ to be $B_{s:t}$ and the contribution of $A_{s:t}$ to $\lambda(A_{1:\tau}) + \nu(A_{1:\tau})$ is at least 
	\begin{equation}
			\nu(B_{s:t}) \geq (1-c_1)\cdot\lambda(S_{s-1:t|s}) + c_0(1- c_1)\cdot\lambda(S_{t:t+1|s}) + c_1\cdot\nu(B_{s:t}).    \label{eq:thm1}
	\end{equation}

	\item Case~1.2: $t = s+w$.  In this case, MSMaxmin compares $\lambda(S_{s-1:t|s})$ with $\nu(B_{s:t})$.  If $\lambda(S_{s-1:t|s}) > \nu(B_{s:t})$, the contribution of $A_{s:t} = S_{s:t|s}$ to $\lambda(A_{1:\tau}) + \nu(A_{1:\tau})$ is greater than or equal to
		\begin{equation} 
		\lambda(S_{s-1:t|s}) > (1-c_1)\cdot\lambda(S_{s-1:t|s}) + c_1\cdot\nu(B_{s:t}).    \label{eq:thm2}
	\end{equation}
	If $\lambda(S_{s-1:t|s}) \leq \nu(B_{s:t})$, the contribution of $A_{s:t} = B_{s:t}$ is at least 
	\begin{equation}
		\nu(B_{s:t}) \geq (1-c_1)\cdot\lambda(S_{s-1:t|s}) + c_1\cdot\nu(B_{s:t}).    \label{eq:thm3}
	\end{equation}
	
\end{itemize}

\item Case~2: $s = 1$, or $s = r+w+1$, or $A_{s-1} = B_{s-1}$.  

	\begin{itemize}
		\item Case~2.1: $t < s+w$.   MSMaxmin compares $\lambda(S_{s:t|s}) + c_0\cdot\lambda(S_{t:t+1|s})$ with $\nu(B_{s:t})$.  If $\lambda(S_{s:t|s}) + c_0\cdot\lambda(S_{t:t+1|s}) > \nu(B_{s:t})$, the contribution of $A_{s:t} = S_{s:t|s}$ is greater than or equal to
		\begin{equation} 
			\lambda(S_{s:t|s}) > (1-c_1)\cdot\lambda(S_{s:t|s}) - c_0c_1\cdot\lambda(S_{t:t+1|s}) + c_1\cdot\nu(B_{s:t}).    \label{eq:thm4}
		\end{equation}
		If $\lambda(S_{s:t|s}) + c_0\cdot\lambda(S_{t:t+1|s}) \leq \nu(B_{s:t})$, the contribution of $A_{s:t} = B_{s:t}$ is at least 
		\begin{equation}
			\nu(B_{s:t}) \geq (1-c_1)\cdot\lambda(S_{s:t|s}) + c_0(1-c_1)\cdot\lambda(S_{t:t+1|s}) + c_1\cdot\nu(B_{s:t}).    \label{eq:thm5}
		\end{equation}

		\item Case~2.2: $t = s+w$.  In this case, MSMaxmin compares $\lambda(S_{s:t|s})$ with $\nu(B_{s:t})$.  If $\lambda(S_{s:t|s}) > \nu(B_{s:t})$, the contribution of $A_{s:t} = S_{s:t|s}$ is greater than or equal to
		\begin{equation} 
			\lambda(S_{s:t|s}) > (1-c_1)\cdot\lambda(S_{s:t|s}) + c_1\cdot\nu(B_{s:t}).    \label{eq:thm6}
		\end{equation}
		If $\lambda(S_{s:t|s}) \leq \nu(B_{s:t})$, the contribution of $A_{s:t} = B_{s:t}$ is at least 
		\begin{equation}
			\nu(B_{s:t}) \geq (1-c_1)\cdot\lambda(S_{s:t|s}) + c_1\cdot\nu(B_{s:t}).    \label{eq:thm7}
		\end{equation}
	\end{itemize}
\end{itemize}

Let $O_{1:\tau}$ be the optimal offline solution for $I_{1:\tau}$.

In the sum of the applications of \eqref{eq:thm0}--\eqref{eq:thm7} to all the periods, the $\nu(\cdot)$ terms sum to $c_1\cdot\nu(B_{1:\tau})$, which is at least $c_1\rho\cdot\nu(O_{1:\tau})$.

Consider the sum of the $\lambda(\cdot)$ terms.  Let $[r,s-1]$ and $[s,t]$ be two consecutive periods.  If \eqref{eq:thm6} or \eqref{eq:thm7} is applicable to $[r,s-1]$, then $s-1 = r+w$ and one of the inequalities \eqref{eq:thm4}--\eqref{eq:thm7} is applicable to $[s,t]$, implying that the sum of the $\lambda(\cdot)$ terms does not include $\lambda(S_{s-1:s|s})$.  Nevertheless, as $s-1 = r+w$, all assignment intervals computed at or before time $r$ do not extend beyond $s-1$.  Therefore, $\lambda(S_{s-1:s|s}) = 0$ and there is no harm done.  For all other kinds of transition from $s-1$ to $s$, the sum of the $\lambda(\cdot)$ terms includes the stability reward of the entities from $s-1$ to $s$ multiplied by a coefficient that is less than 1.  We analyze the smallest coefficient of the $\lambda(\cdot)$ terms as follows.

If  \eqref{eq:thm0} or \eqref{eq:thm4} is applicable to $[r,s-1]$, then $s-1 < r+w$ and we get a $-c_0c_1\cdot\lambda(S_{s-1:s|r})$ term.  In this case, one of the inequalities~\eqref{eq:thm0}--\eqref{eq:thm3} must be applicable to $[s,t]$, which contains the term $(1-c_1)\cdot\lambda(S_{s-1:s|s})$.  We claim that these two terms combine into $(1-c_1-c_0c_1)\cdot\lambda(S_{s-1:s|s})$.  Take any entity $e$.  If StableEntity is not invoked for $e$ at $s$, then $S_{s-1:s|r}[e] = S_{s-1:s|s}[e]$.  If StableEntity is invoked for $e$ at $s$, no assignment interval in $S_{r:r+w|r}[e]$ or $S_{r:r+w|s}[e]$ contains $[s-1,s]$ and so $\lambda(S_{s-1:s|r}[e]) = \lambda(S_{s-1:s|s}[e]) = 0$.  This proves our claim.

If  \eqref{eq:thm1} or \eqref{eq:thm5} is applicable to $[r,s-1]$, we get the term $c_0(1-c_1)\cdot \lambda(S_{s-1:s|r})$ which is equal to $c_0(1-c_1)\cdot\lambda(S_{s-1:s|s})$ as explained in the previous paragraph.

Among the coefficients of the $\lambda(\cdot)$ terms, the smallest ones are $1-c_1-c_0c_1$ and $c_0(1-c_1)$.  Balancing $1-c_1-c_0c_1$ and $c_0(1-c_1)$ gives the relation $c_0+c_1 = 1$.   As a result, $\lambda(A_{1:\tau}) + \nu(A_{1:\tau}) \geq c_0^2\cdot\lambda(S_{1:\tau|\tau}) + c_1\cdot\nu(B_{1:\tau}) \geq c_0^2\cdot\lambda(S_{1:\tau|\tau}) + (1-c_0)\rho \cdot \nu(O_{1:\tau})$.  Here, we use the fact that $S_{s:t|s}$ for a period $[s,t]$ will not be changed after $s$, and so $S_{s:t|s} = S_{s:t|\tau}$.  By Lemma~\ref{lem:2}, we get $\lambda(A_{1:\tau}) + \nu(A_{1:\tau}) \geq \frac{wc_0^2}{w+1} \cdot \lambda(O_{1:\tau|\tau}) + (1-c_0)\rho\cdot\nu(O_{1:\tau})$.  To maximize the competitive ratio, we balance the coefficients $\frac{wc_0^2}{w+1}$ and $(1-c_0)\rho$.  The only positive root of the quadratic equation $wc_0^2 = \rho(w+1)(1-c_0)$ is
\[
\frac{\sqrt{\rho^2(w+1)^2 + 4\rho w(w+1)} - \rho(w+1)}{2w}.
\]
This positive root is less than $\bigl((\rho(w+1) + 2w) - \rho(w+1)\bigr)/(2w) = 1$.
\end{proof}

%We use some cases to illustrate the bound on the competitive ratio in Theorem~\ref{thm:main}.  

%Suppose that we use an exact algorithm $\mathcal{A}$ to obtain the allocations $B_j$ for $j \in [1,\tau]$.  Then, $\rho = 1$ which makes $c_0 = (\sqrt{5w^2+6w+1} - w - 1)/(2w)$.  Our competitive ratio is thus equal to $1-c_0 = (3w+1 - \sqrt{5w^2+6w+1})/(2w)$.  If $w = 1$, it is equal to $(4-\sqrt{12})/2 \approx 0.2679$, which is better than the competitive ratio of $1/6 \approx 0.1667$ in~\cite{BEM18} for $\rho = 1$.  Our competitive ratio can be rewritten as $(3 + 1/w - \sqrt{5 + 6/w + 1/w^2})/2$, so it approaches $(3 - \sqrt{5})/2 \approx 0.38$ as $w$ increases.

Suppose that we keep $\rho$ general and set $w = 1$.  Then, $c_0 = \sqrt{\rho^2 + 2\rho} - \rho$ and our competitive ratio is $(\rho+1-\sqrt{\rho^2+2\rho})\rho$.  To compare with the $\frac{\rho}{4\rho+2}$ bound in~\cite{BEM18}, we consider the difference in the coefficients $\rho+1 - \sqrt{\rho^2+2\rho} - 1/(4\rho+2)$.  Treating this as a function in $\rho$, the derivative of this difference is $1 - (\rho+1)(\rho^2+2\rho)^{-1/2} + (2\rho+1)^{-2}$.  This derivative is negative for $\rho \in (0,1]$, so the smallest difference is roughly $0.2679 - 0.1667 > 0.1$ when $\rho=1$.  Therefore, our competitive ratio is greater than $\frac{\rho}{4\rho+2} + \frac{\rho}{10}$.

%Consider a general $\rho$ and any fixed $w \geq 1$.  We have $\sqrt{\rho^2(w+1)^2 + 4\rho w(w+1)} - \rho(w+1) = (w+1)\sqrt{\rho^2 + 4\rho w/(w+1)} -\rho(w+1) < (w+1)(\sqrt{\rho^2 + 4\rho} - \rho)$.  The quantity $\sqrt{\rho^2+4\rho} - \rho$ achieves its maximum $\sqrt{5}-1$ at $\rho = 1$.  Hence, $c_0 < \frac{(\sqrt{5}-1)(w+1)}{4w}$.  It follow that our competitive ratio is 
%$(1-c_0)\rho > \frac{(5-\sqrt{5})w - \sqrt{5}+1}{4w}\cdot\rho$.

\section{Conclusion}

We presented a $w$-lookahead online algorithm for the multistage online maxmin allocation problem for any fixed $w \geq 1$.   It is more general than the 1-lookahead online algorithm in the literature~\cite{BEM18}.  For the case of $w=1$, our competitive ratio is greater than $\frac{\rho}{4\rho+2} + \frac{\rho}{10}$, which improves upon the previous ratio of $\frac{\rho}{4\rho+2 - 2^{1-\tau}(2\rho+1)}$ in~\cite{BEM18}.  It is unclear whether our analysis of MSMaxmin is tight.  When we set $A_{s:t}$ to be $S_{s:t}$, we only analyze $\lambda(S_{s:t})$ and ignore $\nu(S_{s:t})$.  Conversely, when we set $A_{s:t}$ to be $B_{s:t}$, we only analyze $\nu(B_{s:t})$ and ignore $\lambda(B_{s:t})$.   There may be some opportunities for improvement.

\bibliographystyle{plain}
\bibliography{ref}

\end{document}